\tikzset{>=latex'}
\definecolor{c1}{rgb}{0,0,0}
\definecolor{c2}{rgb}{0,0,0}
\newsavebox\MBox
\newtheorem{theorem}{Theorem}
\definecolor{C1}{rgb}{0,0,0}
\newcommand{\X}{\boldsymbol{X}} 
\newcommand{\Y}{\boldsymbol{Y}} 
\begin{document}

\IEEEoverridecommandlockouts
\title{Topological Interference Management with Alternating Connectivity: The Wyner-Type Three User Interference Channel}
\author{
\IEEEauthorblockN{Soheyl Gherekhloo, Anas Chaaban, and Aydin Sezgin}
\IEEEauthorblockA{Chair of Communication Systems, 
RUB, Germany\\
Email: { \{soheyl.gherekhloo, anas.chaaban, aydin.sezgin\}@rub.de}}
}

\maketitle
\vspace{-1cm}
\begin{abstract}
Interference management \textcolor{c2}{in a three-user} interference \textcolor{c2}{channel} with alternating connectivity with only topological knowledge at the transmitters is considered. 
The network has a Wyner-type channel flavor, i.e., for each connectivity state the receivers observe \textcolor{C1}{at most} one interference signal in addition to their desired signal. \textcolor{c2}{Degrees of freedom (DoF)} upper bounds and \textcolor{c2}{lower bounds} are derived.
The \textcolor{c2}{lower bounds are obtained from a scheme} based on joint encoding across the alternating states. Given a uniform distribution among the connectivity states,  \textcolor{C1}{it is shown that the channel has} $2+\nicefrac{1}{9}$ \textcolor{C1}{DoF}. 
\textcolor{C1}{This provides an increase in the DoF as compared to encoding over each state separately, which achieves $2$ DoF only.}
\end{abstract}
\section{Introduction}
The smart management of interference beyond the classical approaches of avoidance and suppression is nowadays the focus of research on wireless networks.  The means to apply smart management depend certainly (among other things) on the information \textcolor{c2}{available at the transmitting nodes}, such as channel states.  Often it is assumed that comprehensive channel state information is available at \textcolor{C1}{the} transmitters (CSIT). However, providing comprehensive (or perfect) CSIT is a challenging issue in wireless networks, especially for \textcolor{c2}{networks} with high mobility and size. It is thus of interest to study networks based on the assumption of limited or imperfect CSIT. 

The case of completely stale CSIT (\textcolor{c2}{using the so-called} retrospective interference alignment (IA)) was considered in~\cite{MaddahaliTse} for the broadcast channel with two antennas at the base station and single-antennas at the users. It was shown that a degrees of freedom (DoF) of $\nicefrac{4}{3}$ are achievable. Note that this is less than the DoF of $2$ in the perfect CSIT case, however, more than the DoF of $1$ in the case of completely absent CSIT. The approach was generalized to other networks in~\cite{MalekiJafarShamai}. 
Naturally, it might occur that a mixture of CSIT \textcolor{c2}{quality} is available at the transmitters. This issue was addressed in~\cite{GouJafar_mixedCSIT} and \cite{ShengKobayashiGesbertYi} in which the DoF is studied under the assumption of delayed as well as imperfect current CSIT. As most wireless networks are rather heterogeneous in terms of node mobility and capability, the CSI quality at the transmitters is not the same for all users. This was considered in~\cite{TandonJafarShamaiPoor}, in which users have either perfect, delayed, or no CSIT at all. 

A paradigm shift towards interference management with minimal CSIT has been pursued in \cite{Jafar}. The main assumption of \cite{Jafar} is \textcolor{C1}{restricting the CSI feedback to 1 bit only}; which provides information about presence \textcolor{C1}{or absence} of a link. \textcolor{C1}{A link is assumed to be absent if its corresponding interference \textcolor{c2}{to} noise ratio (INR) is lower than 1.} Clearly, by this assumption the CSIT cannot exceed the topology of the network. Therefore, this problem is called ``topological interference management". It is shown in \cite{Jafar} that the \textcolor{c2}{``topological"} interference management problem for the linear wired and wireless network reduces to a single problem. In other words, solving one of these problems leads to \textcolor{C1}{the solution} for the other one, in such a way that the DoF \textcolor{C1}{of} a linear wireless \textcolor{C1}{network leads to} the capacity of the corresponding linear wired channel, \textcolor{c2}{or vice versa}. 

Note that in~\cite{Jafar} the channels are assumed to be time-invariant, which leads to a fixed connectivity within the network. The extension to time-variant channels and thus to alternating connectivity was considered for the two-user interference channel in~\cite{SunGengJafar}. It was shown that the capacity can only be achieved by jointly encoding across \textcolor{C1}{alternating topologies.} 

In this work, we characterize the DoF of a three user interference channel in which each receiving node is either free of interference or is interfered solely by one transmitter.  The analysis is focused on the corresponding wired network with equiprobable topologies, for which the capacity is characterized. This capacity characterization of the wired network leads then (as mentioned before) to the DoF characterization of the wireless network. 

\section{Motivation}
Consider three adjacent cells \textcolor{C1}{in a wireless network}. In each cell, a base station \textcolor{C1}{wants to send a message to} one \textcolor{C1}{desired} receiver. Suppose that a signal is received under the noise level if the distance between the transmitter (Tx) and the receiver (Rx) is less than the radius of the cell. Therefore, all receivers receive their desired signal over the noise level. However, there are some cases in which the receivers observe one interference signal over the noise level in addition to their desired signal.
\begin{figure} 
\centering
\includegraphics[scale=0.11]{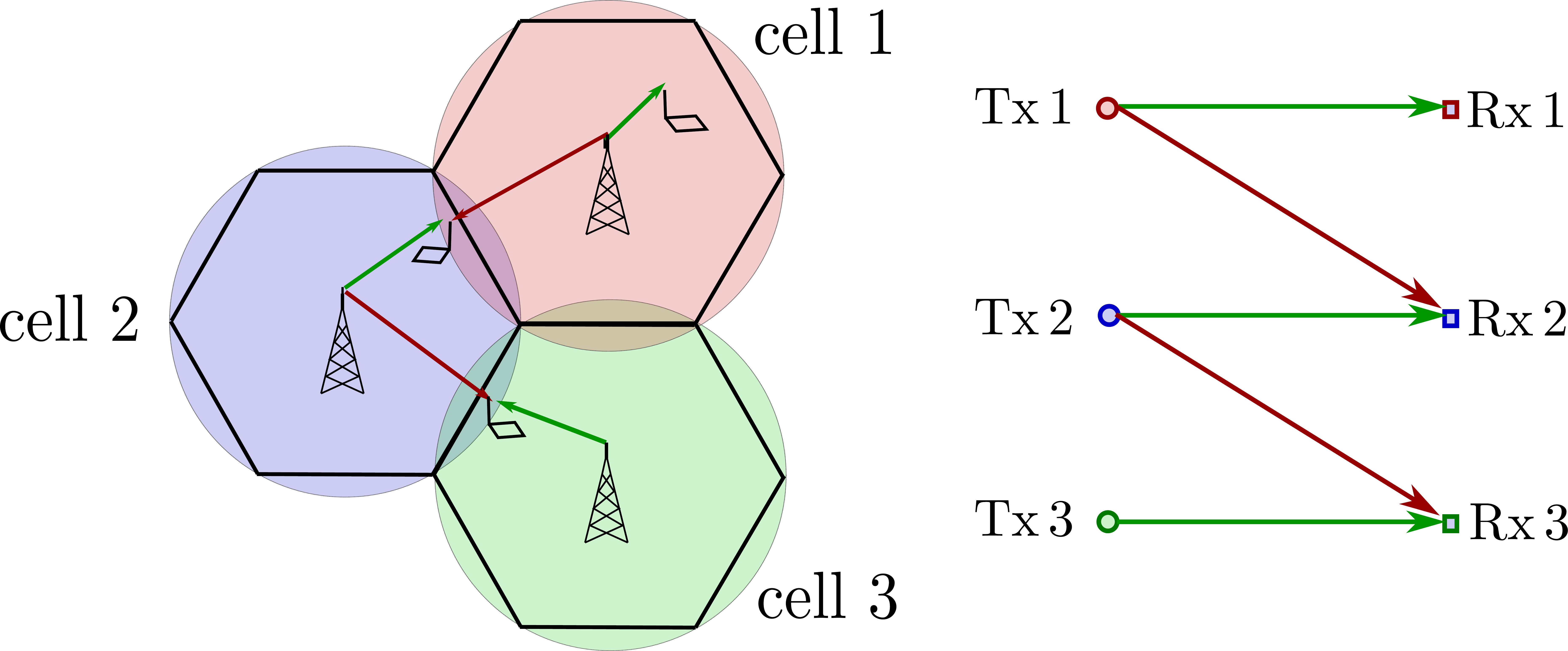}
\caption{\textcolor{C1}{Each} base station \textcolor{C1}{serves} the users located in its cell. However, its signal can be received over the noise level in some areas of the adjacent cells. For example, the users in cell 2 and 3 receive an interference signal from the base stations in cell 1 and 2, respectively. On the right, the topology of this network is shown. \textcolor{C1}{Note that each receiver experiences at most one interference.}}
\label{fig:Hexagonal_cell}
\end{figure}
\textcolor{c2}{This can be seen in} Fig.~\ref{fig:Hexagonal_cell} \textcolor{c2}{which} shows the circular coverage area of three adjacent cells. The area which is allocated to a base station is shown as a hexagon inside a circle. Therefore, there are some areas close to the edges of each cell in which the receiver experiences one interference signal in addition to its own desired signal. As an example, Rx\,2 and Rx\,3 in Fig.~\ref{fig:Hexagonal_cell} observe an interference from undesired base stations Tx\,1 and Tx\,2, respectively. Since an interferer which is weaker than noise does not have an impact on the DoF of the network; the corresponding link to that interferer is assumed to be absent in the topology of the network (see the topology of the wireless network in Fig.~\ref{fig:Hexagonal_cell}).

\section{System Model}
As it is shown in \cite{Jafar}, the capacity of a wired network normalized by the capacity of a single link gives us the degrees of freedom for the corresponding wireless network. \textcolor{C1}{For simplicity, and in order to avoid the unnecessary treatment of noise in the wireless network which does not have an impact on the DoF of the network, we study the wired noiseless network.}
Consider three Tx\textcolor{c2}{'s} which want to communicate with their desired Rx\textcolor{c2}{'s}. Tx\,$i$, $i\in\{1,2,3\}$ wants to send a message $W_i$ to Rx\,$i$. It encodes this message into a length-$n$ sequence $\X_i= (X_i(1),\ldots,X_i(n))$ \textcolor{C1}{and sends this sequence}. The received symbol at Rx\,${j}$ in $k$th channel use is given by 
\begin{align}
Y_j(k) = \sum_{i=1}^3 h_{ji}(k) X_i(k), \quad \forall j\in\lbrace 1, 2, 3 \rbrace \label{eq:received_symbol}
\end{align}
where $X_i(k)$ and $h_{ji}(k)$ denote the transmitted symbol by Tx\,$i$ and the channel coefficient corresponding to the link between Tx\,$i$ and Rx\,${j}$. All symbols are chosen from a Galois Field $\mathbb{GF}$. Moreover, the linear operations are \textcolor{c2}{performed} over this $\mathbb{GF}$. The capacity of \textcolor{C1}{each channel} is $\log|\mathbb{GF}|$, where $|\mathbb{GF}|$ represents the cardinality of $\mathbb{GF}$. Therefore, \textcolor{C1}{only} one symbol can be transmitted over a link \textcolor{C1}{per} channel use. 

In our model, CSIT is restricted only to the topology of the network. Therefore, the only information available \textcolor{C1}{to} the transmitters is about \textcolor{C1}{the} presence \textcolor{C1}{or absence} of links but not about the channel coefficients. However, \textcolor{C1}{both} the local channel coefficients and the topology of the network \textcolor{C1}{are} known at the receivers. 

Since the channel coefficients change, the topology of the network varies during the transmission. 
{Following the motivation in Fig. \ref{fig:Hexagonal_cell}, the desired channels always exist and each receiver} is \textcolor{C1}{disturbed} by \textcolor{C1}{at most} one interferer. \textcolor{C1}{Therefore, the network has a total of 27 topologies as shown in Fig.~\ref{fig:All_cases}.}

It is worth to note that the receivers have an infinite memory and they start the decoding after receiving \textcolor{C1}{a complete sequence} $\Y_j$. Therefore, the order of the occurrence of the states is not important. \textcolor{c2}{Let $\mathcal{A}$ be a set of states shown in Fig. \ref{fig:All_cases}} and $\boldsymbol{X}_{i,\mathcal{A}}$ be the sequence of transmitted symbols by Tx\,$i$ in all states in $\mathcal{A}$. Assuming a length-$n$ sequence $\X$, the length of $\boldsymbol{X}_{i,\mathcal{A}}$ is $n \lambda_\mathcal{A}$, where $\lambda_\mathcal{A}$ denotes the sum of the probabilities of the states in $\mathcal{A}$.

The goal of this work is to \textcolor{C1}{study} the DoF \textcolor{C1}{gain obtained} by jointly encoding across the alternating topologies, \textcolor{C1}{when all states occur with the same probability.}
\newcommand{\startnodes}[1]{
\node at (0.5,0) [above] {#1};
\node (t1) at (0,0) [inner sep=0] {};
\node (t2) at (0,-0.4) [inner sep=0] {};
\node (t3) at (0,-0.8) [inner sep=0] {};
\node (r1) at (1,0) [inner sep=0] {};
\node (r2) at (1,-0.4) [inner sep=0] {};
\node (r3) at (1,-0.8) [inner sep=0] {};
\draw[->] (t1) to (r1);
\draw[->] (t2) to (r2);
\draw[->] (t3) to (r3);}
\newcommand{\redpath}[2]{\draw[->,red] (#1) to (#2);}
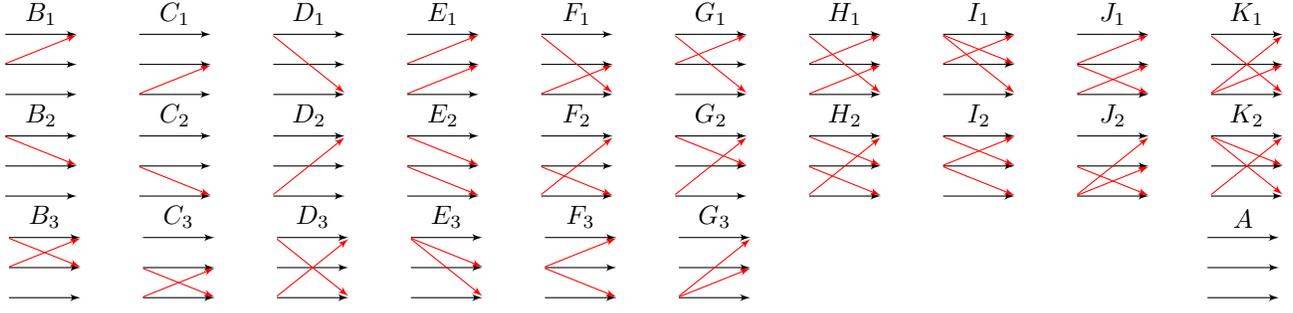
\begin{figure*} 
\centering
\begin{tikzpicture}
\startnodes{$B_1$}
\redpath{t2}{r1}
\end{tikzpicture}
\hspace{5mm}
\begin{tikzpicture}
\startnodes{$C_1$}
\redpath{t3}{r2}
\end{tikzpicture}
\hspace{5mm}
\begin{tikzpicture}
\startnodes{$D_1$}
\redpath{t1}{r3}
\end{tikzpicture}
\hspace{5mm}
\begin{tikzpicture}
\startnodes{$E_1$}
\redpath{t2}{r1}
\redpath{t3}{r2}
\end{tikzpicture}
\hspace{5mm}
\begin{tikzpicture}
\startnodes{$F_1$}
\redpath{t1}{r3}
\redpath{t3}{r2}
\end{tikzpicture}
\hspace{5mm}
\begin{tikzpicture}
\startnodes{$G_1$}
\redpath{t2}{r1}
\redpath{t1}{r3}
\end{tikzpicture}
\hspace{5mm}	
\begin{tikzpicture}
\startnodes{$H_1$}
\redpath{t2}{r1}
\redpath{t3}{r2}
\redpath{t1}{r3}
\end{tikzpicture}
\hspace{5mm}
\begin{tikzpicture}
\startnodes{$I_1$}
\redpath{t1}{r3}
\redpath{t2}{r1}
\redpath{t1}{r2}
\end{tikzpicture}
\hspace{5mm}
\begin{tikzpicture}
\startnodes{$J_1$}
\redpath{t2}{r3}
\redpath{t2}{r1}
\redpath{t3}{r2}
\end{tikzpicture}
\hspace{5mm}
\begin{tikzpicture}
\startnodes{$K_1$}
\redpath{t1}{r3}
\redpath{t3}{r1}
\redpath{t3}{r2}
\end{tikzpicture}

\begin{tikzpicture}
\startnodes{$B_2$}
\redpath{t1}{r2}
\end{tikzpicture}
\hspace{5mm}
\begin{tikzpicture}
\startnodes{$C_2$}
\redpath{t2}{r3}
\end{tikzpicture}
\hspace{5mm}
\begin{tikzpicture}
\startnodes{$D_2$}
\redpath{t3}{r1}
\end{tikzpicture}
\hspace{5mm}
\begin{tikzpicture}
\startnodes{$E_2$}
\redpath{t1}{r2}
\redpath{t2}{r3}
\end{tikzpicture}
\hspace{5mm}
\begin{tikzpicture}
\startnodes{$F_2$}
\redpath{t3}{r1}
\redpath{t2}{r3}
\end{tikzpicture}
\hspace{5mm}
\begin{tikzpicture}
\startnodes{$G_2$}
\redpath{t1}{r2}
\redpath{t3}{r1}
\end{tikzpicture}
\hspace{5mm}
\begin{tikzpicture}
\startnodes{$H_2$}
\redpath{t1}{r2}
\redpath{t2}{r3}
\redpath{t3}{r1}
\end{tikzpicture}
\hspace{5mm}
\begin{tikzpicture}
\startnodes{$I_2$}
\redpath{t2}{r3}
\redpath{t2}{r1}
\redpath{t1}{r2}
\end{tikzpicture}
\hspace{5mm}
\begin{tikzpicture}
\startnodes{$J_2$}
\redpath{t2}{r3}
\redpath{t3}{r1}
\redpath{t3}{r2}
\end{tikzpicture}
\hspace{5mm}
\begin{tikzpicture}
\startnodes{$K_2$}
\redpath{t1}{r3}
\redpath{t3}{r1}
\redpath{t1}{r2}
\end{tikzpicture}

\begin{tikzpicture}
\startnodes{$B_3$}
\redpath{t2}{r1}
\redpath{t1}{r2}
\end{tikzpicture}
\hspace{5mm}
\begin{tikzpicture}
\startnodes{$C_3$}
\redpath{t2}{r3}
\redpath{t3}{r2}
\end{tikzpicture}
\hspace{5mm}
\begin{tikzpicture}
\startnodes{$D_3$}
\redpath{t1}{r3}
\redpath{t3}{r1}
\end{tikzpicture}
\hspace{5mm}
\begin{tikzpicture}
\startnodes{$E_3$}
\redpath{t1}{r2}
\redpath{t1}{r3}
\end{tikzpicture}
\hspace{5mm}
\begin{tikzpicture}
\startnodes{$F_3$}
\redpath{t2}{r1}
\redpath{t2}{r3}
\end{tikzpicture}
\hspace{5mm}
\begin{tikzpicture}
\startnodes{$G_3$}
\redpath{t3}{r1}
\redpath{t3}{r2}
\end{tikzpicture}
\hspace{5mm}
\begin{tikzpicture}
\end{tikzpicture}
\hspace{5cm}
\begin{tikzpicture}
\startnodes{$A$}
\end{tikzpicture}
\caption{All possible states for the three users interference channel, when each receiver observes at most one interferer. The desired links are always present.}
\label{fig:All_cases}
\end{figure*}
\section{Main Result}
\label{sec:Main_Result}
The following theorem provides the main result of this work.
\begin{theorem}
The three user interference channel with alternating connectivity and equiprobable states with at most one interferer per receiver has DoF=$2+1/9$.
\label{Theorem_Alternating_capacity}
\end{theorem}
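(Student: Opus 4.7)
The plan is to prove that the sum DoF equals $19/9$ by matching a converse with an achievable scheme, heavily exploiting the user-permutation symmetry that groups the $27$ topologies of Fig.~\ref{fig:All_cases} into a small number of orbits and reduces the case analysis to one representative per orbit.

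For the converse, the naive state-by-state cut-set bound is too loose. In each state where Rx\,$i$ sees interference from Tx\,$j$, the MAC-type constraint $d_i+d_j \le 1$ holds in that slot; averaging these inequalities over the $27$ equiprobable states gives only $d_1+d_2+d_3 \le 2$. To sharpen this to $19/9=57/27$, I would instead use a multi-letter genie argument: supply selected genie side information across correlated states (such as a coded interfering symbol, or the output of a cooperating receiver) so that the resulting cooperative system can be bounded by a tight entropy argument. Summing three such pairwise bounds obtained by cycling through the user permutations should produce the desired $d_1+d_2+d_3\le 19/9$.

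For the achievability, a block of $27$ slots in which each topology appears exactly once is the natural unit of encoding. Separate encoding delivers $3$ symbols in state $A$, $2$ in each of the $24$ states whose single-state sum-DoF equals $2$, and $3/2$ in each of the cyclic states $H_1,H_2$, totalling $54$ symbols. Joint encoding must therefore supply the three extra symbols. The plan is to bundle complementary states into groups and reuse symbols across their slots so that the interference terms cancel once a receiver pools its observations over the group. The denominator $1/9$ hints at blocks of $9$ coupled slots, in each of which a well-chosen mixture of fresh and repeated transmitted symbols yields one extra decoded symbol per user on average.

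The main obstacle will be pinning down the exact state combinations that admit the extra DoF. Naive sharing of symbols across states (e.g., sending the same triple $(X_1,X_2,X_3)$ in two slots) typically hurts rather than helps, because the repeated sums carry no new information; a careful mixture of fresh and reused symbols is required. On the converse side, obtaining a bound strictly below $2$ demands combining genie arguments tailored to the structure of the heavier-interference states $H_{1,2}, I_{1,2}, J_{1,2}, K_{1,2}$. I expect the user-permutation symmetry to keep both proofs manageable, but matching exactly at $19/9$ will require delicate bookkeeping.
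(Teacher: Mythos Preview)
Your proposal is an outline rather than a proof, and the converse paragraph contains a genuine error: you claim that averaging the per-state MAC constraints gives $d_1+d_2+d_3\le 2$ and call this ``too loose,'' but $2<19/9$, so if that inequality held for the alternating channel it would \emph{contradict} the achievability you want. The per-slot bounds $d_i+d_j\le 1$ simply do not carry over to joint encoding across states; the only trivial valid upper bound is $3$, and getting from $3$ down to $19/9$ is the whole difficulty. What the paper actually does is double the sum rate, give one copy of each $I(W_i;\Y_i)$ the genie side information $(W_j,W_k)$ so that it collapses to $H(\X_i)$, and then partition the $27$ states into carefully chosen sets $\Delta_i,\Gamma_i,\Theta_i$ (with $E_3,F_3,G_3$ playing a distinguished role) so that, after the chain rule, the positive $H(\X_i)$ pieces cancel against the negative $H(\Y_i\mid W_i)$ pieces coming from the non-genie copies. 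That cancellation is the entire content of the converse, and nothing in your sketch identifies the right genie, the right state partition, or the mechanism by which the terms cancel.

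On the achievability side, your guess of nine coupled slots is wrong. The paper groups only four states: in $\{B_1,C_1,D_1,H_1\}$ every transmitter sends a fresh symbol in each of $B_1,C_1,D_1$ (leaving exactly one receiver per state with an unresolved linear combination), and in $H_1$ the transmitters \emph{repeat} precisely the three interfering symbols $d_1,b_2,c_3$, handing each receiver a second independent equation and recovering all nine symbols over four uses instead of the $7.5$ that separate encoding yields. The symmetric group $\{B_2,C_2,D_2,H_2\}$ supplies the remaining $1.5$ extra symbols, and the other $19$ states are encoded separately. Your plan does not identify either the correct state grouping or the retransmission idea that makes it work, and both are the crux of the construction.
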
	
\begin{proof}
We establish Theorem~\ref{Theorem_Alternating_capacity} by showing that the sum capacity of the corresponding wired network is $(2+\frac{1}{9}) \log|\mathbb{GF}|$. In order to do this, we need to find an optimal achievability scheme. The optimality of the scheme is shown by comparing it with a \textcolor{C1}{tight} upper bound of the sum capacity. \textcolor{c2}{We start by proposing an achievability scheme leading to a DoF lower bound denoted \underline{DoF}.}
\subsection*{Achievability:}
The achievability is based on the joint encoding over the sates \cite{SunGengJafar}. To this end, consider states $B_1$, $C_1$, $D_1$, and $H_1$ \textcolor{c2}{in Fig.~\ref{fig:All_cases}}. It can be seen that all interference links in states $B_1$, $C_1$, and $D_1$ are present in state $H_1$. Therefore, we can utilize state $H_1$ to \textcolor{C1}{resolve} the interferences in these states. 
As it is shown in Fig. \ref{fig:Achievability1}, the symbols $b_1$, $c_2$, and $d_3$ cannot be decoded at the desired receivers in states $B_1$, $C_1$, and $D_1$. 
However, by using the state $H_1$, the transmitters provide the symbols which cause interference in states $B_1$, $C_1$, and $D_1$ to the receivers.
Therefore, in total $9$ symbols are decoded correctly at the desired receivers by combining these four states. Similarly, the same joint encoding scheme can be used for $B_2$, $C_2$, $D_2$, and $H_2$ due to symmetry. The remaining states are encoded individually. In all these states except in state $A$, we achieve \underline{DoF}=2 by choosing two active transmitters. For instance, in state $I_1$, \underline{DoF}=2 is achievable when Tx\,2 and Tx\,3 send while Tx\,1 is silent. Overall, the following \underline{DoF} is achievable
\begin{align}
\text{\underline{DoF}}= 
\begin{cases}
9/4 & \quad \text{for }B_1 \cup C_1 \cup D_1 \cup  H_1\\
9/4 & \quad \text{for }B_2 \cup C_2 \cup D_2 \cup  H_2\\
3 & \quad \text{for } A \\
2 & \quad \text{in all remaining 18 states }  	
\end{cases}
\notag
\end{align}
Since, all states occur with equal probability, we can transmit 57 symbols reliably in 27 channel uses in average.  Since every symbol is chosen from $\mathbb{GF}$ with the entropy $\log|\mathbb{GF}|$, the achievable sum rate is 
\begin{align} 
R_\Sigma \leq \left(2+\frac{1}{9}\right)\log|\mathbb{GF}|. \label{achieavablerate}
\end{align}
\newcommand{\achievability}[7]{
\node at (0.5,0) [above] {#1};
\node (t1) at (0,0) [inner sep=0]{};
\node (t1s) at (0,0) [inner sep=0,left]{#2};
\node (t2) at (0,-0.4) [inner sep=0]{};
\node (t2s) at (0,-0.4) [inner sep=0,left]{#3};
\node (t3) at (0,-0.8) [inner sep=0] {};
\node (t3s) at (0,-0.8) [inner sep=0,left] {#4};
\node (r1) at (1,0) [inner sep=0] {};
\node (r1s) at (1,0) [inner sep=0,right] {#5};
\node (r2) at (1,-0.4) [inner sep=0] {};
\node (r2s) at (1,-0.4) [inner sep=0,right] {#6};
\node (r3) at (1,-0.8) [inner sep=0] {};
\node (r3s) at (1,-0.8) [inner sep=0,right] {#7};
\draw[->] (t1) to (r1);
\draw[->] (t2) to (r2);
\draw[->] (t3) to (r3);}

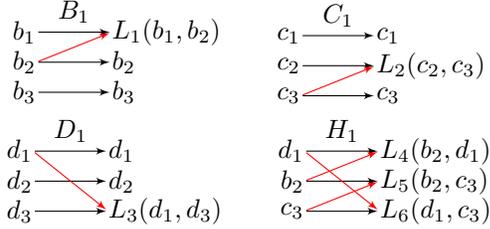
\begin{figure} 
\centering
\begin{tikzpicture}
\achievability{$B_1$}{$b_1$}{$b_2$}{$b_3$}{$L_1(b_1,b_2)$}{$b_2$}{$b_3$}
\redpath{t2}{r1}
\end{tikzpicture}
\hspace{5mm}
\begin{tikzpicture}
\achievability{$C_1$}{$c_1$}{$c_2$}{$c_3$}{$c_1$}{$L_2(c_2,c_3)$}{$c_3$}
\redpath{t3}{r2}
\end{tikzpicture}


\vspace{0.1cm}

\begin{tikzpicture}
\achievability{$D_1$}{$d_1$}{$d_2$}{$d_3$}{$d_1$}{$d_2$}{$L_3(d_1,d_3)$}
\redpath{t1}{r3}
\end{tikzpicture}
\hspace{5mm}
\begin{tikzpicture}
\achievability{$H_1$}{$d_1$}{$b_2$}{$c_3$}{$L_4(b_2,d_1)$}{$L_5(b_2,c_3)$}{$L_6(d_1,c_3	)$}
\redpath{t1}{r3}
\redpath{t2}{r1}
\redpath{t3}{r2}
\end{tikzpicture}
\caption{By combining these four states, we can recover $9$ symbols. However, by considering them separately, we cannot exceed $15/2$ symbols.}
\label{fig:Achievability1}
\end{figure}
\subsection*{Upper bound:}
We establish the upper bound as follows
\begin{align}
n R_\Sigma =&  \sum_{i=1}^{3} H(W_i) \notag \\
=&  \sum_{i=1}^{3} H(W_i) + H(W_i|\Y_i) - H(W_i|\Y_i) \notag \\
\overset{(a)}{\leq} &  \sum_{i=1}^{3} I(W_i;\Y_i) + 3n\epsilon_n, \label{eq:R_sum} 
\end{align}
where ($a$) follows from Fano's inequality \textcolor{c2}{and $\epsilon_n\rightarrow 0$ when $n\rightarrow \infty$}. By multiplying the inequality in \eqref{eq:R_sum} by 2, every mutual information appears twice which corresponds to creating (virtually) three additional receivers. In the next step, we give side information to the actual receivers. The side information equals to the undesired messages at those receivers. Therefore, we write 
\begin{align}
2n R_\Sigma \leq &  I(W_1;\Y_1,W_2,W_3) + I(W_2;\Y_2,W_1,W_3) \label{eq:2R_sum}\\ & + I(W_3;\Y_3,W_1,W_2) 
 +\sum_{i=1}^{3}I(W_i;\Y_i) + 6n\epsilon_n.\notag
\end{align}
By using the chain rule and since the messages of three transmitters are independent from each other, we write
\begin{align}
2nR_\Sigma &\leq I(W_1;\Y_1|W_2,W_3) + I(W_2;\Y_2|W_1,W_3) \notag\\ &\quad+ I(W_3;\Y_3|W_1,W_2) 
+ \sum_{i=1}^{3}I(W_i;\Y_i) + 6n\epsilon_n.  \label{eq:Al_4_1_1}
\end{align}
By expressing the mutual information as entropy terms, \eqref{eq:Al_4_1_1} is restated as
\begin{align}
2nR_\Sigma &\leq  H(\Y_1|W_2,W_3) - H(\Y_1|W_2,W_3,W_1) \notag \\ & \quad+ H(\Y_2|W_1,W_3) -  H(\Y_2|W_1,W_3,W_2) \notag\\ &\quad+ H(\Y_3|W_1,W_2) - H(\Y_3|W_1,W_2,W_3)
\notag\\ & \quad + \sum_{i=1}^{3}I(W_i;\Y_i) + 6n\epsilon_n. \label{eq:Al_4_1} 
\end{align}
Note that knowing all messages, $\Y_i$ can be reconstructed. Therefore, $H(\Y_i|W_1,W_2,W_3) = 0$. The first term in \eqref{eq:Al_4_1} reduces to
\begin{align}
H(\Y_1|W_2,W_3) = H(\X_1), \notag
\end{align}
as $\X_1$ is independent of $W_2$ and $W_3$ and the fact that \textcolor{c2}{scaling} a discrete random variable \textcolor{c2}{by} a constant does not influence entropy \cite{CoverThomas}. Similar \textcolor{c2}{treatment applies to} $H(\Y_2|W_1,W_3)$ and $H(\Y_3|W_1,W_2)$ in \eqref{eq:Al_4_1}. Next, we rewrite \eqref{eq:Al_4_1}  as shown in \eqref{eq:Al_4_2} on the top of next page. The parameters $\Delta_i$, $\Gamma_i$, and $\Theta_i$, $i\in\{1,2,3\}$ are defined as follows
\begin{align}
\Delta_1&=\lbrace D_1,F_1,G_1,H_1,I_1,K_1,K_2,D_3\rbrace \notag\\
\Gamma_1&=\lbrace B_2,E_2,G_2,H_2,I_2,B_3 \rbrace \notag\\
\Theta_1&=\overline{\lbrace E_3\rbrace \cup \Delta_1 \cup \Gamma_1}\notag\\
\Delta_2&=\lbrace B_1,E_1,G_1,H_1,I_1,J_1,I_2,B_3\rbrace \notag\\
\Gamma_2&=\lbrace C_2,E_2,F_2,H_2,J_2,C_3\rbrace \notag\\
\Theta_2&=\overline{\lbrace F_3\rbrace \cup \Delta_2\cup \Gamma_2 }\notag\\
\Delta_3&=\lbrace C_1,E_1,F_1,H_1,J_1,K_1,J_2,C_3\rbrace \notag\\
\Gamma_3&=\lbrace D_2,F_2,G_2,H_2,K_2,D_3\rbrace \notag\\
\Theta_3&= \overline{\lbrace G_3\rbrace \cup \Delta_3\cup \Gamma_3}. \notag
\end{align}
The notation $\overline{\mathcal{A}}$ denotes the complement set of $\mathcal{A}$. 
\begin{figure*}
\begin{align}
2nR_\Sigma \leq &  H(\X_{1,E_3},\X_{1,\Delta_1},\X_{1,\Gamma_1},\X_{1,\Theta_1 })
 + H(\X_{2,F_3},\X_{2,\Delta_2},\X_{2,\Gamma_2},\X_{2,\Theta_2 }) +  H(\X_{3,G_3},\X_{3,\Delta_3},\X_{3,\Gamma_3},\X_{3,\Theta_3 }) \notag \\ 
  &+  H(\X_{1,E_3},\Y_{1,\overline{E_3}}) - H(\X_{2,F_3},\X_{2,\Delta_2},\X_{3,G_3},\X_{3,\Gamma_3},\X_{3,K_1},\X_{3,J_2}) \notag\\
 & + H(\X_{2,F_3},\Y_{2,\overline{F_3}}) - H(\X_{3,G_3},\X_{3,\Delta_3},\X_{1,E_3},\X_{1,\Gamma_1},\X_{1,I_1},\X_{1,K_2}) \notag\\ 
 & + H(\X_{3,G_3},\Y_{3,\overline{G_3}}) - H(\X_{1,E_3},\X_{1,\Delta_1},\X_{2,F_3},\X_{2,\Gamma_2},\X_{2,J_1},\X_{2,I_2}) + 6n\epsilon_n \label{eq:Al_4_2} 
\end{align}
\hrule
\begin{align}
 H(\X_{1,E_3},\X_{1,\Delta_1},\X_{1,\Gamma_1},\X_{1,\Theta_1 })  & \leq  H(\X_{1,E_3}) + H(\X_{1,\Delta_1}|\X_{1,E_3}) + H(\X_{1,\Gamma_1}|\X_{1,E_3})+ H(\X_{1,\Theta_1 }) \label{eq:Al_4_3_1} \\
 H(\X_{2,F_3}, \X_{2,\Delta_2},\X_{2,\Gamma_2},\X_{2,\Theta_2 }) &\leq H(\X_{2,F_3}) + H(\X_{2,\Delta_2}|\X_{2,F_3}) +  H(\X_{2,\Gamma_2}|\X_{2,F_3})+ H(\X_{2,\Theta_2 })
\label{eq:Al_4_3_2} \\
 H(\X_{3,G_3},\X_{3,\Delta_3},\X_{3,\Gamma_3},\X_{3,\Theta_3 }) &\leq H(\X_{3,G_3}) + H(\X_{3,\Delta_3}|\X_{3,G_3}) +  H(\X_{3,\Gamma_3}|\X_{3,G_3})+ H(\X_{3,\Theta_3 })\label{eq:Al_4_3_3} \\
 H(\X_{1,E_3},\Y_{1,\overline{E_3}})  & \leq H(\X_{1,E_3}) + H(\Y_{1,\overline{E_3}}|\X_{1,E_3})\label{eq:Al_4_3_4} \\
 H(\X_{2,F_3},\X_{2,\Delta_2},\X_{3,G_3},\X_{3,\Gamma_3},\X_{3,K_1},\X_{3,J_2}) & \geq H(\X_{2,F_3}) + H(\X_{2,\Delta_2}|\X_{2,F_3}) + H(\X_{3,G_3})+ H(\X_{3,\Gamma_3}|\X_{3,G_3})\label{eq:Al_4_3_5} \\
 H(\X_{2,F_3},\Y_{2,\overline{F_3}})  & \leq H(\X_{2,F_3}) + H(\Y_{2,\overline{F_3}}|\X_{2,F_3})\label{eq:Al_4_3_6} \\
 H(\X_{3,G_3},\X_{3,\Delta_3},\X_{1,E_3},\X_{1,\Gamma_1},\X_{1,I_1},\X_{1,K_2}) & \geq   H(\X_{3,G_3})+ H(\X_{3,\Delta_3}|\X_{3,G_3}) + H(\X_{1,E_3}) + H(\X_{1,\Gamma_1}|\X_{1,E_3})\label{eq:Al_4_3_7} \\
 H(\X_{3,G_3},\Y_{3,\overline{G_3}})  & \leq H(\X_{3,G_3}) +H(\Y_{3,\overline{G_3}}|\X_{3,G_3}) \label{eq:Al_4_3_8} \\
 H(\X_{1,E_3},\X_{1,\Delta_1},\X_{2,F_3},\X_{2,\Gamma_2},\X_{2,J_1},\X_{2,I_2}) &\geq H(\X_{1,E_3})+H(\X_{1,\Delta_1}
|\X_{1,E_3}) + H(\X_{2,F_3})+ H(\X_{2,\Gamma_2}|\X_{2,F_3})  \label{eq:Al_4_3_9}
\end{align}
\hrule
\end{figure*}
By using the chain rule, together with the facts that conditioning does not increase entropy, and that the messages of the users are independent of each other, the individual terms in \eqref{eq:Al_4_2} can be rewritten as in \eqref{eq:Al_4_3_1}-\eqref{eq:Al_4_3_9} \textcolor{c2}{on the top of next page}. 
We can see that \textcolor{c2}{by substituting \eqref{eq:Al_4_3_1}-\eqref{eq:Al_4_3_9} into~\eqref{eq:Al_4_2} many terms will cancel out} and we can rewrite \eqref{eq:Al_4_2} as 
\begin{align}
2n R_\Sigma\leq & \sum_{i=1}^3 H(\X_{i,\Theta_i})+ H(\Y_{1,\overline{E_3}}|\X_{1,E_3}) \label{eq:Al_4_4} \\
&+ H(\Y_{2,\overline{F_3}}|\X_{2,F_3}) + H(\Y_{3,\overline{G_3}}|\X_{3,G_3}) + 6n\epsilon_n.\notag
\end{align}
The inequality \eqref{eq:Al_4_4} \textcolor{c2}{can be further upper bounded by}
\begin{align}
2nR_\Sigma\leq &  \log|\mathbb{GF}|[n\lambda_{\Theta_1}+ n\lambda_{\Theta_2}+ n\lambda_{\Theta_3}  + n(1-\lambda_{E_3})  \notag \\
&+ n(1-\lambda_{F_3})+ n(1-\lambda_{G_3})] + 6n\epsilon_n, \label{eq:Al_4_5}
\end{align}
where we used the chain rule, the fact that conditioning does not increase the entropy, and that the entropy of discrete random variable in $\mathbb{GF}$ is upper bounded by $\log|\mathbb{GF}|$ \cite{CoverThomas}.

Since the set $\Theta_i$ consists of 12 states, $\lambda_{\Theta_i} =\frac{12}{27}$ if all states are equiprobable. \textcolor{c2}{Next, we divide the inequality in \eqref{eq:Al_4_5} by $2n$, and let $n\rightarrow \infty$ to obtain} 
\begin{align}
R_\Sigma \leq & \left(2 + \frac{1}{9}\right) \log|\mathbb{GF}|. \label{tightub}
\end{align}
This agrees with the lower bound in~(\ref{achieavablerate}). 
Normalizing the result by $\log|\mathbb{GF}|$, we get the DoF for the wireless case which proves Theorem \ref{Theorem_Alternating_capacity}.
\end{proof}

We observe from Theorem~\ref{Theorem_Alternating_capacity} that no joint processing is necessary for $\overline{\{B_1,C_1,D_1,H_1,B_2,C_2,D_2,H_2\}}$. However, for $\{B_1,C_1,D_1,H_1,B_2,C_2,D_2,H_2\}$, we need joint encoding to achieve the optimal DoF. The alternative approach would be to treat these states separately as well. This would result in a DoF=$3/2$ and  DoF=$2$ for the states $\lbrace H_1,H_2 \rbrace$ (as shown in \cite{ZhouYu}) and $\{B_1,C_1,D_1,B_2,C_2,D_2\}$ (as shown in \cite{EtkinTseWang}), respectively. \textcolor{c2}{Therefore}, the overall DoF=2 is optimal for separate encoding while by using joint encoding across the alternating topologies $2+1/9$ is the optimal achievable DoF.
\section{Conclusion}
\textcolor{c2}{We studied }the DoF \textcolor{c2}{of} the three users interference channel with an alternating connectivity with only topological knowledge at the transmitters. To do this, we proposed a new joint encoding across the alternating topologies. Moreover, a new genie aided upper bound is established to verify the optimality of the joint encoding scheme. The upper bound is tight for the equiprobable case. As future work, the non-equiprobable case will be addressed. However, this extension is non-trivial due to the \textcolor{c2}{increase} in the \textcolor{c2}{number of} possible combination of states.
\bibliography{myBib}
\end{document}